\theoremstyle{plain}
\newtheorem{theorem}{Theorem}
\newtheorem{corollary}[theorem]{Corollary}
\theoremstyle{definition}
\newtheorem{definition}{Definition}
\newtheorem{remark}[definition]{Remark}
\newcommand{\FF}{\mathbb{F}}
\DeclareMathAlphabet{\mathbfsl}{OT1}{ppl}{b}{it} 
\newcolumntype{Y}{>{\centering\arraybackslash}X} 
\title{$k$-server Byzantine-Resistant PIR Scheme with Optimal Download Rate and Optimal File Size\\[-3mm]}
\author{
  \IEEEauthorblockN{Stanislav Kruglik\IEEEauthorrefmark{1}, Son Hoang Dau\IEEEauthorrefmark{2}, Han Mao Kiah\IEEEauthorrefmark{1}, Huaxiong Wang\IEEEauthorrefmark{1}}
	
 \IEEEauthorblockA{\small \IEEEauthorrefmark{1} School of Physical and Mathematical Sciences, 
		Nanyang Technological University, Singapore
    }
 \IEEEauthorblockA{\small \IEEEauthorrefmark{2} School of Computing Technologies, STEM College, RMIT University, Melbourne, Australia
    }
  {\small stanislav.kruglik@ntu.edu.sg, sonhoang.dau@rmit.edu.au, hmkiah@ntu.edu.sg, hxwang@ntu.edu.sg}}
\begin{document}
\date{}

\maketitle

\hspace*{-12pt}
\begin{abstract}
We consider the problem of designing a Private Information Retrieval (PIR) scheme on $m$ files replicated on $k$ servers that can collude or, even worse, can return incorrect answers. Our goal is to correctly retrieve a specific message while keeping its identity private from the database servers. We consider the asymptotic information-theoretic capacity of this problem defined as the maximum ratio of the number of correctly retrieved symbols to the downloaded one for a large enough number of stored files. We propose an achievable scheme with a small file size and prove that such a file size is minimal \textcolor{black}{for the fixed number of retrieved symbols}, solving the problem pointed out by Banawan and Ulukus. 
\end{abstract}

\setstretch{0.97}

\section{Introduction}
A Private Information Retrieval (PIR) scheme is a tool to retrieve a given file $f_{\iota}$ from a database $\mathbf{x}=(f^{(1)},\ldots, f^{(m)})$, while keeping its identity $\iota\in[m]$ private for the database servers \cite{overview, overview2}. The setup has many related practical applications, including protecting the identity of stock market records reviewed by investment funds because showing specific interest may negatively affect the stock price. The first PIR scheme was proposed in the pioneering paper by Chor et. al. \cite{chor}. In the case of a single server, the authors also showed that to guarantee information-theoretical privacy of retrieved file index, it is necessary for the user to download the entire database.
Thus, to reduce the communication cost in an information-theoretical setting, we have to move to a multi-server setup. In this model, the client queries each of $k$ servers once, while keeping the identity of the retrieved file private from up to $t$ honest-but-curious servers. 
In PIR literature, such a scheme is called $t$-private $k$-server PIR scheme and such property is known as $t$-privacy.

The computer science formulation of the PIR problem assumes files of size one and measures the performance by the sum of the lengths of queries (upload cost) and the sum of the length of responses (download cost)~\cite{Yekhanin, Yekhanin2, Dvir}. Motivated by practical applications, in which the size of the message can be arbitrarily large, the problem of PIR was revisited by the information-theory community. Download cost became the dominant performance metric, and the maximum achievable download rate, defined as a ratio of the retrieved file size to the amount of information downloaded by the user, became a focus of the pleiad of the research papers~\cite{dr1, dr2, dr3, SunJafar}. 

Most of the current PIR schemes assume that servers are honest-but-curious and provide correct answers. However, such an assumption cannot be guaranteed in the cloud environment. This fact poses an interesting question about responses to wrong server answers. Here we provide three different interpretation of this question and their formal definition.

\begin{itemize}
\item \textit{$s$-verifiability [referred as $s$-security in \cite{zhangwang}]}. The client can detect the presence of up to $s$ servers that persuade the client to output a wrong result (\cite{zhangwang, zhang, kruglik2023, cao2023}). 
\item \textit{$a$-accountability}. The client can identify each of up to $a$ servers that persuade the client to output a wrong result (\cite{zhang2, zhao}).
\item \textit{$b$-byzantine resistance/$b$-byzantine robustness}. The client can retrieve the correct result in presence of up to $b$ servers that persuade the client to output a wrong result (\cite{BPIR1, BPIR2, BPIR3, BPIR4}).
\end{itemize}

It is clear that $a$-accountability implies $a$-verifiability, while $b$-byzantine resistance implies both $b$-accountability and $b$-verifiability~\cite{zhangwang}. However, in some practical applications, the user needs to be able to correctly reconstruct the desired message irrespective of the adversarial actions of servers. This fact motivates us to consider the strongest notion of $b$-byzantine resistant PIR~\cite{BPIR1, BPIR2, BPIR3, BPIR4}. 

The capacity of $b$-byzantine resistant PIR scheme for the case of $2b+t<k$ is shown in \cite{BPIR1} to be equal to
\begin{equation}
    C_m(t,b,k)=\frac{k-2b}{k}\cdot\frac{1-\frac{t}{k-2b}}{1-\left(\frac{t}{k-2b}\right)^m}.
\end{equation}

Authors of \cite{BPIR1} also proposed a general achievable scheme based on MDS codes. It utilizes the division of each file into multiple sub-packets whose number is denoted as sub-packetization. 
In~\cite{BPIR1}, the scheme has sub-packetization value $(k-2b)^m$, while the problem of obtaining its minimum capacity-achieving quantity is left as an open one. We do note that each sub-packet usually corresponds to some finite field element, and the size of the latter drastically affects the implementation costs~\cite{implementation}. Thus, in this paper, we focus on the total file size. 
Since the number of files is high, we are interested in asymptotic capacity values, where the size of the file in scheme from~\cite{BPIR1} is tremendous. So, we let $m\to \infty$, and for the case of $2b+t<k$ we have 
\begin{equation}\label{ascapacity}
C(t,b,k)\triangleq\lim_{m\to\infty} C_m(t,b, k)=\frac{k-2b}{k}\cdot\left(1-\frac{t}{k-2b}\right).
\end{equation}

\textcolor{black}{There has been considerable research on reducing sub-packetization levels and file sizes for different PIR setups, including $1$-colluding replicated PIR~\cite{chao2019}, $1$-colluding MDS-coded PIR~\cite{xu2018, zhu2020, zhou2020} and $t$-colluding replicated PIR~\cite{zhang2019}. However, to the best of our knowledge, there are no papers that consider a similar problem for Byzantine-resistant PIR. To close this gap, in this paper, we propose non-universal $b$-byzantine resistant $k$-server PIR with optimal download rate and small file size for asymptotically large number of files.} We also formally prove that the latter is minimal among all capacity-achieving schemes. The key ingredients of our method are the recently proposed communication-efficient secret sharing scheme based on trace recovery framework \cite{communicationefficientSS} and the technique to repair Reed-Solomon code in presence of erroneous traces  \cite{barg}.

\section{Preliminaries}
\subsection{Notations}
For any integer $n>0$ we denote $[n]=\{1,\ldots,n\}$. For any prime power $q$, we denote an extended finite field with $q^s$ elements as $\mathbb{F}_{q^s}$. The base field with $q$ elements is denoted as $\mathbb{F}_{q}$. For any $\xi\in\mathbb{F}_{q^s}$ we define the trace function from $\mathbb{F}_{q^s}$ to $\mathbb{F}_{q}$ as $\textrm{Tr}(\xi)=\sum_{i=0}^{s-1}\xi^{q^i}$. We note that it is $\mathbb{F}_{q}$-linear function. By $\mathbb{F}_{q^s}[\xi]$ we denote the ring of polynomials over $\mathbb{F}_{q^s}$. By superscript
$T$, we denote the transpose of a vector. By $\mathbf{M}_{ij}$ we denote $(i, j)$th entry of matrix $\mathbf{M}$. By $\langle \mathbf{M}, \mathbf{N} \rangle$ we denote the Frobenius inner product of $\mathbf{M}$ and $\mathbf{N}$, 
i.e. $\langle \mathbf{M}, \mathbf{N} \rangle = \sum_{i,j} M_{ij}N_{ij}$. \textcolor{black}{By $H(X)$ - we denote the entropy of discrete random variable $X$.} 

\subsection{$k$-Server PIR Schemes}
Let us formally define $k$-server PIR schemes. Let the database $\mathbf{x}$ be formed of $m$ files $f^{(1)},\ldots,f^{(m)}$ and replicated on each server. The user wants to retrieve the file $\iota$ by sending the queries $\textbf{q}_1,\ldots,\textbf{q}_k$ to each server.
Based on the received query $\textbf{q}_j$, each server $j\in[k]$ computes the answer $\textbf{a}_j$ and sends it back to the user. 
\textcolor{black}{In the byzantine PIR setting, there exists unknown to the user set of up to $b$ servers that can provide incorrect answers to queries. After this introduction, we can define $k$-server $t$-private $b$-byzantine resistant PIR.}
\begin{definition}[$k$-server $t$-private $b$-byzantine resistant PIR]\label{PIRdef} A $k$-server $t$-private $b$-byzantine resistant PIR is a scheme that satisfies the following properties:
\begin{enumerate}
    \item (\textit{Privacy}) \textcolor{black}{The scheme is $t$-private, i.e., any subset of $t$ or less queries do not reveal any information about the identity of the file.} 
    \item (\textit{Correctness}) \textcolor{black}{The scheme is correct and $b$-byzantine resistant, i.e., the user is always able to successfully decode the file from any $k$ queries and corresponding answers even if $b$ answers are incorrect. We note that the set of $b$ incorrect responses {\em a priori} is not known to the user.}
    
\end{enumerate}
\end{definition}
\begin{remark}
By setting $b=0$ this definition is \textcolor{black}{reduced} to $k$-server $t$-private PIR scheme. 
\end{remark}
\begin{definition}[retrieval threshold]
\textcolor{black}{A $k$-server $t$-private $b$-byzantine resistant PIR scheme from Definition~\ref{PIRdef} has the retrieval threshold $r$ if, for all sets of $r$ and more answers, the user is always able to successfully decode the file from these answers and corresponding queries, even if $b$ answers are incorrect. As before, we note that the set of $b$ incorrect responses {\em a priori} is not known to the user.}
\end{definition}

\subsection{A Communication-Efficient PIR Scheme}\label{basicscheme}

Let us adopt a communication-efficient secret-sharing scheme from \cite{communicationefficientSS} to obtain $k$-server $t$-private PIR scheme with optimal download rate. For simplicity, we consider a non-universal case when we request responses from exactly $k\geq r$ servers, where $r$ is the recovery threshold and \textcolor{black}{$(r-t)$ divides $(k-t)$. In the same way as in the Reed-Solomon repairing problem, we can reduce the total download cost by increasing the number of servers involved~\cite{guruswami2017}.} 

\noindent\rule{9cm}{0.8pt}
\textit{Scheme\;$\Pi_1$:}\;$k$-server\;$t$-private\;PIR
\noindent\rule{9cm}{0.4pt}

Let $t$, $r$, $k$ be positive integers satisfying $t<r<k\leq q$, $\Delta=r-t$ and $\Delta|(k-t)$.  Denote by $s\triangleq\frac{k-t}{\Delta}$. Let $\Omega_{\alpha}=\{\alpha_1,\ldots,\alpha_{\Delta}\}\subset\mathbb{F}_{q^s}$, $\Omega_{\chi}=\{\chi_1,\ldots,\chi_t\}\subset\mathbb{F}_{q^s}$ and $\Omega_{\beta}=\{\beta_1,\ldots,\beta_k\}\subseteq\mathbb{F}_{q}$ be publicly known non-intersecting sets such that all elements of $\Omega_{\alpha}$ are roots of distinct monic irreducible polynomials of degree $s$ over $\mathbb{F}_q$.

Let us represent the database $\mathbf{x}$ with $m$ files as a $m\times\Delta$-array. Let the $(i,j)$-th entry of $\mathbf{x}$ be $x^{(i)}_j$. Then we set the file $f^{(i)}\triangleq[x^{(i)}_1,\ldots, x^{(i)}_\Delta]$.
Therefore,
\[
\mathbf{x} = \begin{bmatrix}
	x^{(1)}_1 & x^{(1)}_2 & \cdots & x^{(1)}_\Delta\\
	x^{(2)}_1 & x^{(2)}_2 & \cdots & x^{(2)}_\Delta\\
	\vdots & \vdots & \ddots &\vdots\\
	x^{(m)}_1 & x^{(m)}_2 & \cdots & x^{(m)}_\Delta
\end{bmatrix}	= 
\begin{bmatrix}
	f^{(1)} \\
	f^{(2)} \\
	\vdots \\
	f^{(m)}
\end{bmatrix}\,.
\] 
Then we define $\mathbf{e}_{(i,j)}$ to be the $m\times \Delta$ indicator array for the database components. 
In other words,  the $(i,j)$th entry of $\mathbf{e}_{(i,j)}$ is one, while all other entries of $\mathbf{e}_{(i,j)}$ are zero. We replicate the database $\mathbf{x}$ on $k$ servers.
\begin{itemize}
    \item \textit{Query generation algorithm:} To retrieve the file $\iota\in[m]$ user randomly generates $t$ $(m\times\Delta)$-arrays $\mathbf{r}^{(1)},\ldots,\mathbf{r}^{(t)}$ and draw a random degree-$(t+\Delta-1)$ curve
    \begin{align}
\mathbf{g}(\xi) 
& \triangleq \sum_{j=1}^\Delta \prod_{\ell\in[\Delta]\setminus\{j\}}\left(\frac{\xi-\alpha_\ell}{\alpha_j-\alpha_\ell}\right)
							\prod_{\ell=1}^t \left(\frac{\xi-\chi_\ell}{\alpha_j-\chi_\ell}\right) \mathbf{e}_{(\iota,j)} \notag \\
& + \sum_{h=1}^t\prod_{\ell=1}^\Delta \left(\frac{\xi-\alpha_\ell}{\chi_h-\alpha_\ell}\right)
\prod_{\ell\in[t]\setminus\{h\}} \left(\frac{\xi-\chi_\ell}{\chi_h-\chi_\ell}\right) \mathbf{r}^{(h)}\label{eq:query}
\end{align}
that resides in $\mathbb{F}_{q^s}^{m\times\Delta}$ and passes through points $(\alpha_1,\mathbf{e}_{(\iota,1)}),\ldots, (\alpha_{\Delta},\mathbf{e}_{(\iota,\Delta)})$. Query to server $j\in[k]$ is $\mathbf{g}(\beta_j)$. We note that both $\mathbf{g}$ and $\mathbf{r}^{(h)}$ depend on retrieved index $\iota$, but we omit the subscript $\iota$ for readability.

\item \textit{Answer generation algorithm:} Upon receive the query $\mathbf{g}(\beta_j)$, server $j\in[k]$ computes the Frobenius inner product $\langle \mathbf{g}(\beta_j), \mathbf{x} \rangle$. We can observe that 
\begin{align*}
&\langle \mathbf{g}(\xi), \mathbf{x} \rangle 
 = \sum_{j=1}^\Delta \prod_{\ell\in [\Delta]\ne\{j\}}\left(\frac{\xi-\alpha_\ell}{\alpha_j-\alpha_\ell}\right)
\prod_{\ell=1}^t \left(\frac{\xi-\chi_\ell}{\alpha_j-\chi_\ell}\right) x^{(\iota)}_j \notag \\
& + \sum_{h=1}^t\prod_{\ell=1}^\Delta \left(\frac{\xi-\alpha_\ell}{\chi_h-\alpha_\ell}\right)
\prod_{\ell\in[t]\setminus\{h\}} \left(\frac{\xi-\chi_\ell}{\chi_h-\chi_\ell}\right) \langle\mathbf{r}^{(h)},\mathbf{x} \rangle,
\end{align*}
which is a polynomial in $\xi$ of degree $\Delta+t-1=r-1$. 
We call this polynomial $\phi(\xi)$ and observe further that $\phi(\alpha_i)=x^{(\iota)}_i$ for $i\in[\Delta]$.
\begin{itemize}
    \item For retrieval from answers from $r$ servers, server $j$ responds with value of $\mathbf{a}_j=\phi(\beta_j)\in \mathbb{F}_{q^s}$.
    \item For retrieval from answers from $k$ servers, server responds with 
    \begin{equation}
    \mathbf{a}_j=\textrm{Tr}(v_j\phi(\beta_j)) \in \mathbb{F}_q,   \end{equation}
    where 
    \begin{equation}
    v_j=\prod_{\ell=1}^{\Delta}(\beta_j-\alpha_{\ell})^{-1}\times\prod_{\ell\in[k]\setminus\{j\}}(\beta_j-\beta_\ell)^{-1}.
\end{equation}
\end{itemize}
\item \textit{File retrieval algorithm} 
\begin{itemize}
    \item For retrieval from answers from $r$ servers, the user applies the Lagrange interpolation formula.
    \item For retrieval from answers from $k$ servers, the user prepares a basis $\{\theta_1,\ldots,\theta_s\}$ for $\mathbb{F}_{q^s}$ over $\mathbb{F}_q$ and its trace-orthogonal basis $\{\eta_1,\ldots,\eta_s\}$. After that, the user chooses polynomials $h_{i\delta}\in\mathbb{F}_q[\xi]$ of degree less than $s$ for all $i\in[\Delta]$ and $\delta\in[s]$ so that 
    \begin{equation}
    h_{i\delta} (\alpha_{i})=u_{i}^{-1}\eta_{\delta}\prod_{\ell\in[\Delta]\setminus\{i\}}\tilde{f}_{\ell}^{-1}(\alpha_{i}),
    \end{equation}
    where $\tilde{f}_{\ell}(\xi)$ is the minimal polynomial of $\alpha_{\ell}$ over $\mathbb{F}_q$ and 
    \begin{equation}
    u_{i}=\prod_{\ell\in[\Delta]\setminus\{i\}}(\alpha_{i}-\alpha_{\ell})^{-1}\times\prod_{j=1}^k(\alpha_{i}-\beta_j)^{-1}. 
    \end{equation}
    The user retrieves the file of interest by
    \begin{align}
    &x_{i}^{(\iota)}=\phi(\alpha_{i})=-\sum_{\delta=1}^s\theta_{\delta}\Bigg(\sum_{j=1}^kh_{i\delta}(\beta_j)\textrm{Tr}(v_j\phi(\beta_j))\cdot\notag \\
    &\prod_{\ell\in[\Delta]\setminus\{i\}}\tilde{f}_{\ell}(\beta_j)\Bigg),
    \end{align}
    for all $i\in[\Delta]$.
\end{itemize}
\end{itemize}
\noindent\rule{9cm}{0.4pt}
\begin{theorem}\label{theorem1}
Scheme~$\Pi_1$ is $k$-server $t$-private PIR over $\mathbb{F}_{q^s}$ with file size $(k-t)\log(q)$ and a recovery threshold $r$ that achieves the asymptotic capacity~\eqref{ascapacity} for $b=0$ and any given $\Delta$ and $r$ so that
\begin{equation*}
t<r<k\leq q,\;\Delta=r-t\;\;\textrm{and}\;\;\Delta|(k-t), 
\end{equation*}
and $s=\frac{k-t}{\Delta}$.
\end{theorem}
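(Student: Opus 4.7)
My plan is to verify the four claims of Theorem~\ref{theorem1} in turn: $t$-privacy of the queries, correctness of recovery in both the $r$-server and $k$-server variants, the file size $(k-t)\log q$, and the rate matching $C(t,0,k)$. Since Scheme~$\Pi_1$ is the PIR adaptation of the communication-efficient secret-sharing construction of~\cite{communicationefficientSS}, much of the underlying algebra can be inherited from there; the work is to check that the Frobenius inner-product structure of PIR is compatible with those identities.

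For \textbf{privacy}, the key observation is that $\mathbf{g}(\xi)$ is the unique degree-$(r-1)$ polynomial curve in $\mathbb{F}_{q^s}^{m\times\Delta}$ interpolating the $\Delta$ fixed points $(\alpha_j,\mathbf{e}_{(\iota,j)})$ together with the $t$ random points $(\chi_h,\mathbf{r}^{(h)})$. I would fix an arbitrary $\mathcal{T}\subseteq[k]$ with $|\mathcal{T}|=t$ and show that, for each $\iota$, the linear map $(\mathbf{r}^{(1)},\ldots,\mathbf{r}^{(t)})\mapsto(\mathbf{g}(\beta_j))_{j\in\mathcal{T}}$ is a bijection: combining the $t$ server evaluations with the $\Delta$ fixed $\iota$-dependent evaluations at the $\alpha_j$'s yields $r$ interpolation points that reconstruct $\mathbf{g}$, whence $\mathbf{r}^{(h)}=\mathbf{g}(\chi_h)$. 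Uniformity of the $\mathbf{r}^{(h)}$ then pushes forward to uniformity of $Q_{\mathcal{T}}$ regardless of $\iota$, giving $I(N;Q_{\mathcal{T}})=0$.

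For \textbf{correctness}, taking the Frobenius inner product with $\mathbf{x}$ annihilates all contributions from rows $i\ne\iota$, and the first double product becomes a Lagrange basis on the $\alpha_j$'s, so $\phi(\xi)=\langle\mathbf{g}(\xi),\mathbf{x}\rangle$ is a degree-$(r-1)$ polynomial with $\phi(\alpha_i)=x_i^{(\iota)}$ for $i\in[\Delta]$. In the $r$-server variant, any $r$ values $\phi(\beta_j)$ determine $\phi$ by Lagrange interpolation. In the $k$-server variant, each answer is a single trace $\textrm{Tr}(v_j\phi(\beta_j))\in\mathbb{F}_q$, and I would reduce the recovery of $\phi(\alpha_i)$ to the trace-based Reed--Solomon repair identity of~\cite{communicationefficientSS}, checking that the stated formula involving the dual bases $\{\theta_\delta\},\{\eta_\delta\}$ and the auxiliary polynomials $h_{i\delta}$ is the correct specialization at the evaluation point $\alpha_i$. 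This is the step I expect to require the most care.

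Finally, the \textbf{file size} is $\Delta$ symbols of $\mathbb{F}_{q^s}$, giving $\Delta s\log q=(k-t)\log q$ because $s=(k-t)/\Delta$; and in the $k$-server variant the total download is $k\log q$ bits, so the rate is $(k-t)/k=1-t/k$, which coincides with $C(t,0,k)$ from~\eqref{ascapacity} at $b=0$. The divisibility $\Delta\mid(k-t)$ and the bound $k\le q$ enter only to guarantee that $s$ is an integer and that enough distinct evaluation points $\beta_j$ exist in the base field.
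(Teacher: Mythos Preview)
Your proposal is correct and follows essentially the same route as the paper: Shamir-style interpolation for $t$-privacy, Lagrange interpolation for the $r$-server recovery, the trace-based Reed--Solomon repair identity (via the dual GRS code and trace-orthogonal bases) for the $k$-server recovery, and the same file-size/rate computation. The only cosmetic difference is that you argue privacy by a direct bijection $(\mathbf{r}^{(1)},\ldots,\mathbf{r}^{(t)})\mapsto(\mathbf{g}(\beta_j))_{j\in\mathcal{T}}$ at the matrix level, whereas the paper first decomposes entry-by-entry into $m\Delta$ independent channels before invoking the same interpolation-uniqueness argument; both are standard and equivalent formulations.
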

\begin{proof}
According to the definition of $k$-server $t$-private PIR, we will prove privacy and correctness properties and show that responses from $r$ servers are enough for file retrieval. The proof is very similar to the proof from \cite{communicationefficientSS}. To make the paper self-contained, we present the proof here in all the details.

To prove the security, we need to show that 
\begin{equation}\label{sec1}
I(\mathbf{g}(\beta_{l_1}),\ldots, \mathbf{g}(\beta_{l_t});\mathbf{e}_{\iota,1},\ldots,\mathbf{e}_{\iota, \Delta})=0,
\end{equation}
for any subset $\{l_1,\ldots,l_t\}\subset[k]$ of servers and any file index $\iota\in[m]$.

As each element of the matrix $\mathbf{g}(\xi)$ is encoded separately from other elements and corresponding random symbols are independent, $(i,j)$th entry of $\mathbf{g}(\xi)$ depends only on $\mathbf{e}_{(i,j)}$ and conditionally independent of everything else. Hence, our scheme is equivalent to the transmission over $m\Delta$ independent channels \cite{cover} and, as a result, we have
\begin{align}\label{sec2}
&I(\mathbf{g}(\beta_{l_1}),\ldots, \mathbf{g}(\beta_{l_t});\mathbf{e}_{\iota,1},\ldots,\mathbf{e}_{\iota, \Delta})\notag\\ 
&\leq \sum_{i=1}^m\sum_{j=1}^{\Delta} I(\mathbf{g}(\beta_{l_1})_{(ij)},\ldots, \mathbf{g}(\beta_{l_t})_{(ij)};(\mathbf{e}_{\iota,1})_{(ij)},\ldots,(\mathbf{e}_{\iota, \Delta})_{(ij)}).
\end{align}

It can be easily seen that for each $i, j$, $\mathbf{g}(\beta_{l_1})_{(ij)},\ldots,\mathbf{g}(\beta_{l_t})_{(ij)}$ are $t$ evaluations of random polynomial $\psi_{(ij)}$ of degree $t+\Delta-1$ over $\mathbb{F}_{q^s}$ at \textcolor{black}{$l_t$ different} points $\beta_{l_1},\ldots,\beta_{l_t}$. \textcolor{black}{Hence, for} any given values of $(\mathbf{e}_{\iota,1})_{ij},\ldots,(\mathbf{e}_{\iota, \Delta})_{ij}$ by Lagrange interpolating formula we can obtain a unique polynomial $\psi_{(ij)}$ over $\mathbb{F}_{q^s}$ such that $\psi_{(ij)}(\alpha_1)=(\mathbf{e}_{\iota,1})_{(ij)},\ldots,\psi_{(ij)}(\alpha_{\Delta})=(\mathbf{e}_{\iota, \Delta})_{(ij)}$ and $\psi_{(ij)}(\beta_{l_1})=\mathbf{g}(\beta_{l_1})_{(ij)},\ldots,\psi_{(ij)}(\beta_{l_t})=\mathbf{g}(\beta_{l_t})_{(ij)}$. This implies that $I(\mathbf{g}(\beta_{l_1})_{(ij)},\ldots, \mathbf{g}(\beta_{l_t})_{(ij)};(\mathbf{e}_{\iota,1})_{(ij)},\ldots,(\mathbf{e}_{\iota, \Delta})_{(ij)})=0$ and by~\eqref{sec2},
the privacy property holds.

The property that responses from $r$ servers are enough for file retrieval trivially follows from the facts - that servers responses are values of polynomial $\phi$ over $\mathbb{F}_{q^s}$ of degree $r-1$ so that $\phi(\alpha_j)=x_j^{(\iota)}$ for $j\in[\Delta]$ and we can use a Lagrange interpolation formula to retrieve them. 

Let us prove the correctness of scheme~$\Pi_1$. It is clear that values $(\phi(\alpha_1),\ldots,\phi(\alpha_{\Delta}),\phi(\beta_1),\ldots,\phi(\beta_k))$ can be seen as a codeword of Reed-Solomon code 
\begin{align}\label{RS}
\textrm{RS}_{r}&(\Omega_{\alpha}\cup\Omega_{\beta})=\{(\phi(\alpha_1),\ldots,\phi(\alpha_{\Delta}),\phi(\beta_1),\notag \\
&\ldots,\phi(\beta_k))|\phi\in\mathbb{F}_{q^s}[\xi], \deg(\phi)<r\}. 
\end{align}

Dual of $\textrm{RS}_{r}$ is a Generalized-Reed Solomon code \cite{RSdecode} defined as
\begin{align}\label{GRS}
&\textrm{GRS}_{k+\Delta-r}(\Omega_{\alpha}\cup\Omega_{\beta})=\{u_1h(\alpha_1),\ldots,u_{\Delta}h(\alpha_{\Delta}),v_1h(\beta_1),\notag\\
&\ldots,v_{k}h(\beta_k))|h\in\mathbb{F}_{q^s}[\xi], \deg(h)<k+\Delta-r=k-t\},
\end{align}
where $u_{i}=\prod_{\ell\in[\Delta]\setminus\{i\}}(\alpha_{i}-\alpha_{\ell})^{-1}\times\prod_{j=1}^k(\alpha_{i}-\beta_j)^{-1}$ and $v_j=\prod_{\ell=1}^{\Delta}(\beta_j-\alpha_{\ell})^{-1}\times\prod_{\ell\in[k]\setminus\{j\}}(\beta_j-\beta_\ell)^{-1}$ for $i\in[\Delta]$ and $j\in[k]$.

As each $\alpha_j, j\in[\Delta]$ is a root of different monic irreducible polynomial $\tilde{f}_j$ of degree $s$ over $\mathbb{F}_q$ we have that
\begin{equation}
\tilde{f}_{j}(\alpha_{j})=0\;\;\;\tilde{f}_{j}(\alpha_i)\ne 0\;\;\;\textrm{for}\;\;i\in[\Delta], j\ne i
\end{equation}
\begin{equation}
\prod_{j\in[\Delta], j\ne i}\tilde{f}_{j}(\alpha_n)=0\;\;\;\textrm{for}\;\;n\in[\Delta], n\ne i.
\end{equation}

Let $\{\theta_1,\ldots,\theta_s\}$ be the basis of $\mathbb{F}_{q^s}$ over $\mathbb{F}_q$ and $\{\eta_1,\ldots,\eta_s\}$ is its trace-orthogonal basis. For each $\delta\in[s]$ and $i\in[\Delta]$, we can represent the element $u_{i}^{-1}\eta_{\delta}\prod_{\ell\in[\Delta], \ell\ne i}\tilde{f}_{\ell}^{-1}(\alpha_i)$ as the value of function $h_{i\delta}(\xi)\in\mathbb{F}_q[\xi]$ of degree less than $s$ at point $\alpha_{i}$. It is clear that  $\deg(h_{i\delta}\prod_{\ell\in[\Delta]\setminus\{i\}}\tilde{f}_{\ell})<\Delta s\leq k-t$ and hence such functions belong to the dual Generalized Reed-Solomon code~\eqref{GRS}. Also, we have that 
\begin{equation}
h_{i\delta}(\alpha_i)\prod_{\ell\in[\Delta]\setminus\{i\}}\tilde{f}_{\ell}(\alpha_{i})=u_{i}^{-1}\eta_{\delta} \end{equation}
and
\begin{equation}
h_{i\delta}(\alpha_n)\prod_{\ell\in[\Delta]\setminus\{i\}}\tilde{f}_{\ell}(\alpha_n)=0\;\;\;\textrm{for all}\;\;n\in[\Delta], n\ne i.
\end{equation}
Consequently,
\begin{align}
&\big(u_1h_{i\delta}(\alpha_1)\prod_{\ell\in[\Delta]\setminus\{i\}}\tilde{f}_{\ell}(\alpha_1),\ldots,u_{\Delta}h_{i\delta}(\alpha_{\Delta})\prod_{\ell\in[\Delta]\setminus\{i\}}\tilde{f}_{\ell}(\alpha_{\Delta}),\notag \\ 
&v_1h_{i\delta}(\beta_1)\prod_{\ell\in[\Delta]\setminus\{ i\}}\tilde{f}_{\ell}(\beta_1),\ldots, v_kh_{i\delta}(\beta_k)\prod_{\ell\in[\Delta]\setminus\{i\}}\tilde{f}_{\ell}(\beta_k)\big) \notag \\
&\cdot(\phi(\alpha_1),\ldots,\phi(\alpha_{\Delta}),\phi(\beta_1),\ldots,\phi(\beta_k))^T=0.
\end{align}

Utilizing the properties of function $h_{i\delta}(\xi)\in\mathbb{F}_q[\xi]$ we have
\begin{align}
&\eta_{\delta}\phi(\alpha_{i})+v_1h_{i\delta}\phi(\beta_1)\prod_{\ell\in[\Delta]\setminus\{i\}}\tilde{f}_{\ell}(\beta_1)+\ldots+\notag\\
&v_kh_{i\delta}\phi(\beta_k)\prod_{\ell\in[\Delta]\setminus\{i\}}\tilde{f}_{\ell}(\beta_k)=0  
\end{align}
and
\begin{equation}\label{rec11}
\eta_{\delta}\phi(\alpha_{i})=-\sum_{j=1}^{k}\left(v_jh_{i\delta}(\beta_j)\phi(\beta_j)\prod_{\ell\in[\Delta]\setminus\{i\}}\tilde{f}_{\ell}(\beta_j)\right).
\end{equation}

Applying trace-mapping function to both sides of equation~\eqref{rec11} and utilizing the facts that $h_{i\delta}(\xi)\in\mathbb{F}_q[\xi]$, $\tilde{f}_{\ell}(\xi)\in\mathbb{F}_q[\xi]$ and $\beta_j\in\mathbb{F}_q$ for all $i, \ell\in[\Delta]$, $\delta\in[s]$, $j\in[k]$ together with the linearity of trace-mapping function we obtain that
\begin{align}\label{rec21}
&\textrm{Tr}(\eta_{\delta}\phi(\alpha_{i}))=-\sum_{j=1}^k\textrm{Tr}\left(v_j\phi(\beta_j)h_{i\delta}(\beta_j)\prod_{\ell\in[\Delta]\setminus\{i\}}\tilde{f}_{\ell}(\beta_j)\right)=\notag \\
&-\sum_{j=1}^kh_{i\delta}(\beta_j)\left(\textrm{Tr}\left(v_j\phi(\beta_j)\right)\prod_{\ell\in[\Delta]\setminus\{i\}}\tilde{f}_{\ell}(\beta_j)\right)  
\end{align}

From the fact that $\{\theta_1,\ldots,\theta_s\} $ and $\{\eta_1,\ldots,\eta_s\}$ are trace-orthogonal bases of $\mathbb{F}_{q^s}$ over $\mathbb{F}_q$ it is clear (see, for example, \cite{Lidl}[Ch.~$2$]) that
\begin{equation}
x_{i}^{(\iota)}=\phi(\alpha_{i})=\sum_{\delta=1}^{s}\theta_{\delta}\textrm{Tr}(\eta_{\delta}\phi(\alpha_{i}))  \end{equation}
and hence all $\phi(\alpha_1),\ldots,\phi(\alpha_{\Delta})$ can be recovered by accessing $\textrm{Tr}(v_j\phi(\beta_j))$ from all involved servers $j=1,\ldots,k$.

The observations that each file consists of $\Delta$ elements of $\mathbb{F}_{q^s}$ for $s=\frac{k-t}{\Delta}$ and download rate is equal to $\frac{k-t}{k}$ finish the proof. 

\end{proof}



\section{Byzantine-Resistant PIR scheme}

Let us construct a $k$-server PIR scheme with $t$-colluding and $b$-byzantine servers by modifying the construction from Section~\ref{basicscheme}. For simplicity, we consider a non-universal case when we request responses from exactly $k\geq r$ servers, where $r$ is the recovery threshold\textcolor{black}{, and $(r-2b-t)$ divides $(k-2b-t)$. Here, we} employ the idea of~\cite{barg} to include error-correction capability in our PIR scheme.

\noindent\rule{9cm}{0.8pt}
\textit{Scheme\;$\Pi_2$:}\;$k$-server\;$t$-private\;$b$-byzantine\;resistant\;PIR
\noindent\rule{9cm}{0.4pt}

Let $t$, $r$, $k$ be positive integers satisfying $t<r-2b<k-2b\leq k\leq q$, $\Delta=r-2b-t$ and $\Delta|(k-2b-t)$.  Denote by $s\triangleq\frac{k-2b-t}{\Delta}$. Let $\Omega_{\alpha}=\{\alpha_1,\ldots,\alpha_{\Delta}\}\subset\mathbb{F}_{q^s}$, $\Omega_{\chi}=\{\chi_1,\ldots,\chi_t\}\subset\mathbb{F}_{q^s}$ and $\Omega_{\beta}=\{\beta_1,\ldots,\beta_k\}\subseteq\mathbb{F}_{q}$ be publicly known non-intersecting sets such that all elements of $\Omega_{\alpha}$ are roots of distinct monic irreducible polynomials of degree $s$ over $\mathbb{F}_q$.

Let us represent the database $\mathbf{x}$ with $m$ files as a $m\times\Delta$-array. Let the $(i,j)$-th entry of $\mathbf{x}$ be $x^{(i)}_j$. 
Then we set the file $f^{(i)}\triangleq[x^{(i)}_1,\ldots, x^{(i)}_\Delta]$.
Therefore,
\[
\mathbf{x} = \begin{bmatrix}
	x^{(1)}_1 & x^{(1)}_2 & \cdots & x^{(1)}_\Delta\\
	x^{(2)}_1 & x^{(2)}_2 & \cdots & x^{(2)}_\Delta\\
	\vdots & \vdots & \ddots &\vdots\\
	x^{(m)}_1 & x^{(m)}_2 & \cdots & x^{(m)}_\Delta
\end{bmatrix}	= 
\begin{bmatrix}
	f^{(1)} \\
	f^{(2)} \\
	\vdots \\
	f^{(m)}
\end{bmatrix}\,.
\] 
Then we define $\mathbf{e}_{(i,j)}$ be the $m\times \Delta$ indicator array for the database components. 
In other words,  the $(i,j)$th entry of $\mathbf{e}_{(i,j)}$ is one, while all other entries of $\mathbf{e}_{(i,j)}$ are zero. We replicate the database $\mathbf{x}$ on $k$ servers.
\begin{itemize}
\item \textit{Query generation algorithm:} To retrieve the file $\iota\in[m]$ user randomly generates $t$ $(m\times\Delta)$-arrays $\mathbf{r}^{(1)},\ldots,\mathbf{r}^{(t)}$ and draw a random degree-$(t+\Delta-1)$ curve 
    \begin{align}
\mathbf{g}(\xi) 
& \triangleq \sum_{j=1}^\Delta \prod_{\ell\in[\Delta]\setminus\{j\}}\left(\frac{\xi-\alpha_\ell}{\alpha_j-\alpha_\ell}\right)
							\prod_{\ell=1}^t \left(\frac{\xi-\chi_\ell}{\alpha_j-\chi_\ell}\right) \mathbf{e}_{(\textcolor{black}{\iota},j)} \notag \\
& + \sum_{h=1}^t\prod_{\ell=1}^\Delta \left(\frac{\xi-\alpha_\ell}{\chi_h-\alpha_\ell}\right)
\prod_{\ell\in[t]\setminus\{h\}} \left(\frac{\xi-\chi_\ell}{\chi_h-\chi_\ell}\right) \mathbf{r}^{(h)}\label{eq:query2}
\end{align}
that resides in $\mathbb{F}_{q^s}^{m\times\Delta}$ and passes through points $(\alpha_1,\mathbf{e}_{(\iota,1)}),\ldots, (\alpha_{\Delta},\mathbf{e}_{(\iota,\Delta)})$. Query to server $j\in[k]$ is $\mathbf{g}(\beta_j)$. We note that both $\mathbf{g}$ and $\mathbf{r}^{(h)}$ depend on retrieved index $\iota$, but we omit the subscript $\iota$ for readability.
\item \textit{Answer generation algorithm:} Upon receive the query $\mathbf{g}(\beta_j)$, server $j\in[k]$ computes the Frobenius inner product $\langle \mathbf{g}(\beta_j), \mathbf{x} \rangle$. We can observe that 
\begin{align*}
&\langle \mathbf{g}(\xi), \mathbf{x} \rangle 
 = \sum_{j=1}^\Delta \prod_{\ell\in [\Delta]\ne\{j\}}\left(\frac{\xi-\alpha_\ell}{\alpha_j-\alpha_\ell}\right)
\prod_{\ell=1}^t \left(\frac{\xi-\chi_\ell}{\alpha_j-\chi_\ell}\right) x^{(\iota)}_j \notag \\
& + \sum_{h=1}^t\prod_{\ell=1}^\Delta \left(\frac{\xi-\alpha_\ell}{\chi_h-\alpha_\ell}\right)
\prod_{\ell\in[t]\setminus\{h\}} \left(\frac{\xi-\chi_\ell}{\chi_h-\chi_\ell}\right) \langle\mathbf{r}^{(h)},\mathbf{x} \rangle,
\end{align*}
which is a polynomial in $\xi$ of degree $\Delta+t-1=r-2b-1$. 
We call this polynomial $\phi(\xi)$ and observe further that $\phi(\alpha_i)=x^{(\iota)}_i$ for $i\in[\Delta]$.
\begin{itemize}
    \item For retrieval from answers from $r$ servers, server $j$ responds with value of $\mathbf{a}_j=\phi(\beta_j)\in\FF_{q^s}$.
    \item For retrieval from answers from $k$ servers, server responds with 
    \begin{equation}
    \mathbf{a}_j=\textrm{Tr}(v_j\phi(\beta_j))\in\FF_q,   \end{equation}
    where 
    \begin{equation}
    v_j=\prod_{\ell=1}^{\Delta}(\beta_j-\alpha_{\ell})^{-1}\times\prod_{\ell\in[k]\setminus\{j\}}(\beta_j-\beta_\ell)^{-1}.
\end{equation}
\end{itemize}
\item \textit{File retrieval algorithm} 
\begin{itemize}
    \item For retrieval from answers from $r$ servers, the user applies any Reed-Solomon code decoding algorithm (see, for example, \cite{RSdecode}).
    \item For retrieval from answers from $k$ servers, the user decodes the vector
    \begin{equation}
    \left(\prod_{\ell=1}^{\Delta}\tilde{f}_{\ell}(\beta_1)\textrm{Tr}(v_1\phi(\beta_1)),\ldots,\prod_{\ell=1}^{\Delta}\tilde{f}_{\ell}(\beta_k)\textrm{Tr}(v_k\phi(\beta_k))\right),
    \end{equation}
    where $\tilde{f}_{\ell}(\xi)$ is the minimal polynomial of $\alpha_{\ell}$ over $\mathbb{F}_q$, as a codeword of Generalized Reed-Solomon code over $\mathbb{F}_q$ by any decoding algorithm (see, for example, \cite{RSdecode}) and extract the values 
    \begin{equation}
    \textrm{Tr}(v_j\phi(\beta_j)),\;\;\textrm{for}\;\;j\in[k].
    \end{equation}
    After it, user prepares a basis $\{\theta_1,\ldots,\theta_s\}$ for $\mathbb{F}_{q^s}$ over $\mathbb{F}_q$ and its trace-orthogonal basis $\{\eta_1,\ldots,\eta_s\}$. After it, the user chooses polynomials $h_{i\delta}\in\mathbb{F}_q[\xi]$ of degree less than $s$ for all $i\in[\Delta]$ and $\delta\in[s]$ so that
    \begin{equation}
    h_{i\delta} (\alpha_{i})=u_{i}^{-1}\eta_{\delta}\prod_{\ell\in[\Delta]\setminus\{i\}}\tilde{f}_{\ell}^{-1}(\alpha_{i}),
    \end{equation}
    and
    \begin{equation}
    u_{i}=\prod_{\ell\in[\Delta]\setminus\{i\}}(\alpha_{i}-\alpha_{\ell})^{-1}\times\prod_{j=1}^k(\alpha_{i}-\beta_j)^{-1}. 
    \end{equation}
    The user retrieves the file of interest by
    \begin{align}
    &x_{i}^{(\iota)}=\phi(\alpha_{i})=-\sum_{\delta=1}^s\theta_{\delta}\Bigg(\sum_{j=1}^kh_{i\delta}(\beta_j)\textrm{Tr}(v_j\phi(\beta_j))\cdot\notag \\
    &\prod_{\ell\in[\Delta]\setminus\{i\}}\tilde{f}_{\ell}(\beta_j)\Bigg),
    \end{align}
    for all $i\in[\Delta]$.
\end{itemize}
\end{itemize}
\noindent\rule{9cm}{0.4pt}
\begin{theorem}\label{theorem2}
Scheme~$\Pi_2$ is $k$-server $t$-private $b$-byzantine PIR over $\mathbb{F}_{q^s}$ with file size $(k-2b-t)\log(q)$ and recovery threshold $r$ that achieves the asymptotic capacity~\eqref{ascapacity} for any given $\Delta$ and $r$ so that
\begin{equation*}
t<r-2b<k-2b\leq q,\;\Delta=r-2b-t\;\;\textrm{and}\;\;\Delta|(k-2b-t),  
\end{equation*}
and $s=\frac{k-2b-t}{\Delta}$.
\end{theorem}
\begin{proof}
According to the definition of $k$-server $t$-private $b$-byzantine resistant PIR, we will prove privacy and correctness properties and show that responses from $r$ servers are enough for file retrieval in presence of up to $b$ incorrect responses. The proof of privacy coincides with privacy proof for Theorem~\ref{theorem1} and is omitted here. 

The property that responses from $r$ servers are enough for file retrieval follows from the fact that values $(\phi(\alpha_1),\ldots,\phi(\alpha_{\Delta}),\phi(\beta_{1}),\ldots,\phi(\beta_{k}))$ can be seen as a codeword of Reed-Solomon code
\begin{align}\label{RS1}
\textrm{RS}_{r-2b}&(\Omega_{\alpha}\cup\Omega_{\beta})=\{(\phi(\alpha_1),\ldots,\phi(\alpha_{\Delta}),\phi(\beta_{1}),\notag \\
&\ldots,\phi(\beta_{k}))|\phi\in\mathbb{F}_{q^s}[\xi], \deg(\phi)<r-2b\}. 
\end{align}
As a result, by values of polynomial $\phi$ in any $r$ points, we can correctly interpolate it in the presence of $b$ incorrect value utilizing any Reed-Solomon decoding algorithm (see, for example, \cite{RSdecode}). 

Let us prove the correctness of scheme $\Pi_2$. It is clear that dual of $\textrm{RS}_{r-2b}$ is a Generalized-Reed Solomon code \cite{RSdecode} defined as
\begin{align}\label{GRS2}
&\textrm{GRS}_{k+\Delta-(r-2b)}(\Omega_{\alpha}\cup\Omega_{\beta})=\{u_1h(\alpha_1),\ldots,u_{\Delta}h(\alpha_{\Delta}),v_1h(\beta_1),\notag\\
&\ldots,v_{k}h(\beta_k))|h\in\mathbb{F}_{q^s}[\xi], \deg(h)<k+\Delta-r+2b=k-t\},
\end{align}
where $u_{i}=\prod_{\ell\in[\Delta]\setminus\{i\}}(\alpha_{i}-\alpha_{\ell})^{-1}\times\prod_{j=1}^k(\alpha_{i}-\beta_j)^{-1}$ and $v_j=\prod_{\ell=1}^{\Delta}(\beta_j-\alpha_{\ell})^{-1}\times\prod_{\ell\in[k]\setminus\{j\}}(\beta_j-\beta_\ell)^{-1}$ for $i\in[\Delta]$ and $j\in[k]$.

As each $\alpha_j, j\in[\Delta]$ is a root of different monic irreducible polynomial $\tilde{f}_j$ of degree $s$ over $\mathbb{F}_q$ we have that
\begin{equation}
\tilde{f}_{j}(\alpha_{j})=0\;\;\;\tilde{f}_{j}(\alpha_i)\ne 0\;\;\;\textrm{for}\;\;i\in[\Delta], j\ne i
\end{equation}
\begin{equation}
\prod_{j\in[\Delta]\setminus\{i\}}\tilde{f}_{j}(\alpha_n)=0\;\;\;\textrm{for}\;\;n\in[\Delta], n\ne i.
\end{equation}

Let $\{\theta_1,\ldots,\theta_s\}$ be the basis of $\mathbb{F}_{q^s}$ over $\mathbb{F}_q$ and $\{\eta_1,\ldots,\eta_s\}$ is its trace-orthogonal basis. For each $\delta\in[s]$ and $i\in[\Delta]$, we can represent the element $u_{i}^{-1}\eta_{\delta}\prod_{\ell\in[\Delta]\setminus\{i\}}\tilde{f}_{\ell}^{-1}(\alpha_i)$ as the value of function $h_{i\delta}(\xi)\in\mathbb{F}_q[\xi]$ of degree less than $s$ at point $\alpha_{i}$. It is clear that  $\deg(h_{i\delta}\prod_{\ell\in[\Delta]\setminus\{i\}}\tilde{f}_{\ell})<\Delta s<k-t$ and hence such functions belong to the dual Generalized Reed-Solomon code~\eqref{GRS2}. Also, we have that
\begin{equation}
h_{i\delta}(\alpha_i)\prod_{\ell\in[\Delta]\setminus\{i\}}\tilde{f}_{\ell}(\alpha_{i})=u_{i}^{-1}\eta_{\delta} \end{equation}
and
\begin{equation}
h_{i\delta}(\alpha_n)\prod_{\ell\in[\Delta]\setminus\{i\}}\tilde{f}_{\ell}(\alpha_n)=0\;\;\;\textrm{for all}\;\;n\in[\Delta], n\ne i.
\end{equation}
Consequently,
\begin{align}
&\big(u_1h_{i\delta}(\alpha_1)\prod_{\ell\in[\Delta]\setminus\{i\}}\tilde{f}_{\ell}(\alpha_1),\ldots,u_{\Delta}h_{i\delta}(\alpha_{\Delta})\prod_{\ell\in[\Delta]\setminus\{i\} }\tilde{f}_{\ell}(\alpha_{\Delta}),\notag \\ 
&v_1h_{i\delta}(\beta_1)\prod_{\ell\in[\Delta]\setminus\{i\}}\tilde{f}_{\ell}(\beta_1),\ldots, v_kh_{i\delta}(\beta_k)\prod_{\ell\in[\Delta]\setminus\{i\}}\tilde{f}_{\ell}(\beta_k)\big) \notag \\
&\cdot(\phi(\alpha_1),\ldots,\phi(\alpha_{\Delta}),\phi(\beta_1),\ldots,\phi(\beta_k))^T=0.
\end{align}

Utilizing the properties of function $h_{i\delta}(\xi)\in\mathbb{F}_q[\xi]$ we can write down that 
\begin{align}
&\eta_{\delta}\phi(\alpha_{i})+v_1h_{i\delta}\phi(\beta_1)\prod_{\ell\in[\Delta]\setminus\{i\}}\tilde{f}_{\ell}(\beta_1)+\ldots+\notag\\
&v_kh_{i\delta}\phi(\beta_k)\prod_{\ell\in[\Delta]\setminus\{i\}}\tilde{f}_{\ell}(\beta_k)=0  
\end{align}
and
\begin{equation}\label{rec111}
\eta_{\delta}\phi(\alpha_{i})=-\sum_{j=1}^{k}\left(v_jh_{i\delta}(\beta_j)\phi(\beta_j)\prod_{\ell\in[\Delta]\setminus\{i\}}\tilde{f}_{\ell}(\beta_j)\right).
\end{equation}

Applying trace-mapping function to both sides of equation~\eqref{rec111} and utilizing the facts that $h_{i\delta}(\xi)\in\mathbb{F}_q[\xi]$, $\tilde{f}_{\ell}(\xi)\in\mathbb{F}_q[\xi]$ and $\beta_j\in\mathbb{F}_q$ for all $i, \ell\in[\Delta]$, $\delta\in[s]$, $j\in[k]$ together with the linearity of trace-mapping function we obtain that
\begin{align}\label{rec22}
&\textrm{Tr}(\eta_{\delta}\phi(\alpha_{i}))=-\sum_{j=1}^k\textrm{Tr}\left(v_j\phi(\beta_j)h_{i\delta}(\beta_j)\prod_{\ell\in[\Delta]\setminus\{i\} }\tilde{f}_{\ell}(\beta_j)\right)=\notag \\
&-\sum_{j=1}^kh_{i\delta}(\beta_j)\left(\textrm{Tr}\left(v_j\phi(\beta_j)\right)\prod_{\ell\in[\Delta]\setminus\{i\}}\tilde{f}_{\ell}(\beta_j)\right)  
\end{align}

From the fact that $\{\theta_1,\ldots,\theta_s\} $ and $\{\eta_1,\ldots,\eta_s\}$ are trace-orthogonal bases of $\mathbb{F}_{q^s}$ over $\mathbb{F}_q$ it is clear (see, for example, \cite{Lidl}[Ch.~$2$]) that
\begin{equation}
x_{i}^{(\iota)}=\phi(\alpha_{i})=\sum_{\delta=1}^{s}\theta_{\delta}\textrm{Tr}(\eta_{\delta}\phi(\alpha_{i}))  \end{equation}
an hence all $\phi(\alpha_1),\ldots,\phi(\alpha_{\Delta})$ can be recovered by accessing $\textrm{Tr}(v_j\phi(\beta_j))$ from all involved servers $j=1,\ldots,k$.

Let us show that we can correctly recover $\phi(\alpha_1),\ldots,\phi(\alpha_{\Delta})$ even in case of at most $b$ incorrect values of $\textrm{Tr}(v_j\phi(\beta_j))$. Following the ideas from \cite{barg}, let us replace the functions $h_{i\delta}$ in derivations above by functions 
\begin{equation}
    \tilde{h}_{e}(\xi)=\xi^e\prod_{\ell\in[\Delta]}\tilde{f}_{\textcolor{black}{\ell}}(\xi)\in\mathbb{F}_q[\xi].
\end{equation}

It is clear that $\tilde{h}_{e}(\alpha_{i})=0$ for all $i\in[\Delta]$ and $\deg(h_e(\xi))<\Delta s+e$. Hence, for all $e<2b$, we have that $\deg(h_e(\xi))<k-t$ and, as a result, these functions belong to the dual Generalized Reed-Solomon code~\eqref{GRS2}. Consequently, 
\begin{align}\label{repeq3}
&(u_1\tilde{h}_e(\alpha_1),\ldots,u_{\Delta}\tilde{h}_e(\alpha_{\Delta}),v_1\tilde{h}_e(\beta_1),\ldots, v_k\tilde{h}_e(\beta_k))\cdot \notag \\ 
&(\phi(\alpha_1),\ldots,\phi(\alpha_{\Delta}),\phi(\beta_1),\ldots,\phi(\beta_k))^T=\notag \\
&v_1\tilde{h}_e(\beta_1)\phi(\beta_1)+\ldots+v_k\tilde{h}_e(\beta_k)\phi(\beta_k)=0.
\end{align}

As $\tilde{h}_e(\xi)\in\mathbb{F}_q[\xi]$ and $\{\beta_1,\ldots,\beta_k\}\subseteq\mathbb{F}_q$, applying the trace-mapping function to both sides of equation~\eqref{repeq3} and utilizing its linearity, we have 
\begin{align}
&\beta_1^e\prod_{\ell\in[\Delta]}\tilde{f}_{\ell}(\beta_1)\textrm{Tr}(v_1\phi(\beta_1))+\ldots\notag \\ 
&+\beta_k^e\prod_{\ell\in[\Delta]}\tilde{f}_{\ell}(\beta_k)\textrm{Tr}(v_k\phi(\beta_k))=0,
\end{align}
where $e=0,1,\ldots,2b-1$, and $\beta_j$, $\prod_{\ell\in[\Delta]}\tilde{f}_{\ell}(\beta_k)$, $\textrm{Tr}(v_j\phi(\beta_j))$ belong to $\mathbb{F}_q$ for all $j\in[k]$. Hence, elements $\prod_{\ell\in[\Delta]}\tilde{f}_{\ell}(\beta_k)\textrm{Tr}(v_k\phi(\beta_k))$ form a codeword of Generalized Reed-Solomon code over $\mathbb{F}_q$ of length $k$ and dimension $k-2b$. This code can correct up to $b$ errors, and user can retrieve the correct values of $\textrm{Tr}(v_jf(\beta_j))$ from server responses for all $j\in[k]$ \cite{RSdecode}.

The observations that each file consists of $\Delta$ elements of $\mathbb{F}_{q^s}$ for $s=\frac{k-t-2b}{\Delta}$ and download rate is equal to $\frac{k-t-2b}{k}$ finish the proof.

\end{proof}
\begin{table*}[ht!]
\caption{Parameters of PIR scheme with optimal download rate. All parameters are measured in bits. }\label{table}
\scalebox{0.96}
{
\begin{tabular}{|l|l|l|l|l|}
\hline
                     & $\Pi_1$                                       & $\Pi_2$                                             & $\mathbb{A}_1$       & $\mathbb{A}_2$             \\ \hline
File size            & $l(k-t)\log(q)$                               & $l(k-2b-t)\log(q)$                                  & $l(r-t)(k-t)\log(q)$ & $l(r-2b-t)(k-2b-t)\log(q)$ \\ \hline
Field                & $\mathbb{F}_{q^s}$, where $s=\frac{k-t}{r-t}$ & $\mathbb{F}_{q^s}$, where $s=\frac{k-2b-t}{r-2b-t}$ & $\mathbb{F}_q$       & $\mathbb{F}_q$             \\ \hline
Download cost        & $lk\log(q)$                                    & $lk\log(q)$                                          & $lk(r-t)\log(q)$      & $lk(r-2b-t)\log(q)$         \\ \hline
Download rate        & $1-\frac{t}{k}$                               & $1-\frac{2b+t}{k}$                                  & $1-\frac{t}{k}$      & $1-\frac{2b+t}{k}$         \\ \hline
Capacity             & $1-\frac{t}{k}$                               & $1-\frac{2b+t}{k}$                                  & $1-\frac{t}{k}$      & $1-\frac{2b+t}{k}$         \\ \hline
Byzantine-resistance & $0$                                           & $b\ne0$                                             & $0$                  & $b\ne0$                    \\ \hline
\end{tabular}
}
\end{table*}

\section{Lower bound on the file size}
For any given base field $\mathbb{F}_q$ of size $q\geq k$, the size of file $\Delta s\log_2q$ bits depends on the parameter $\Delta s$. In this section, following the derivations of~\cite[Section~V]{communicationefficientSS}, we describe the class of byzantine-resistant PIR schemes for an asymptotically large number of files, and then show that the size of the file in our scheme is optimal. 

\begin{definition}[Balanced byzantine-resistant PIR]
The byzantine-resistant PIR scheme is balanced if the client downloads a single element of the same subfield $\mathbb{F}_{q^R}$ from each involved server.
\end{definition}

\begin{definition}[Rate optimal byzantine-resistant PIR]
Byzantine-resistant PIR scheme is rate optimal if \textcolor{black}{for any $i$th file $f^{(i)}$} $H(f^{(i)})=(r-2b-t)s=s\Delta$.   
\end{definition}

After introducing the necessary definitions, we can formulate the main theorem of this section. 

\begin{theorem}
For a balanced rate-optimal byzantine-resistant PIR scheme that achieves the minimum download rate for a large enough number of files, the following hold:
\begin{equation}
(r-2b-t)|(k-2b-t)
\end{equation}
\begin{equation}
s\Delta\geq k-2b-t
\end{equation}
\end{theorem}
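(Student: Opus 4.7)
My plan is to port the converse argument of \cite[Section~V]{communicationefficientSS} to the byzantine setting, using the Singleton-style reduction already exploited in the direct part of Scheme~$\Pi_2$: $b$ byzantine answers cost exactly $2b$ coordinates in any dimension/rate inequality for the answer code.

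First, I would unwind the hypotheses into a single arithmetic identity. Balancedness says the total download is $kR\log q$ bits; rate-optimality says $H(F)=s\Delta\log q$ bits; and the ``achieves the minimum download rate for large $m$'' condition equates the rate to the asymptotic capacity~\eqref{ascapacity},
\begin{equation*}
\frac{s\Delta}{kR}=\frac{k-2b-t}{k}\quad\Longleftrightarrow\quad s\Delta=R(k-2b-t).
\end{equation*}
Since $R$ is a positive integer (the extension degree of the subfield $\mathbb{F}_{q^R}$ over $\mathbb{F}_q$), we obtain $s\Delta\geq k-2b-t$ immediately, with equality realised by Scheme~$\Pi_2$ at $R=1$. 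This takes care of the second claim and leaves the capacity-equation $s\Delta=R(k-2b-t)$ available for the divisibility claim.

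Next I would argue that $R\mid s$; together with the capacity-equation this gives $s'\Delta=k-2b-t$ with $s':=s/R\in\mathbb{Z}_{>0}$, hence $\Delta\mid(k-2b-t)$ and the first claim. The intuition for $R\mid s$ is that each server's $\mathbb{F}_{q^R}$-valued answer must be extractable as a subfield-projection of an $\mathbb{F}_{q^s}$-valued intermediate quantity carrying the $\Delta$ chunks of $F$, forcing $\mathbb{F}_{q^R}\subseteq\mathbb{F}_{q^s}$. Formally, following \cite[Section~V]{communicationefficientSS}, I would consider the code formed by the $k$ answers and show that at capacity it is an $(k,k-2b-t)$-MDS code over $\mathbb{F}_{q^R}$: the dimension $k-2b-t$ comes from ``$-t$'' for $t$-privacy together with ``$-2b$'' from the byzantine-correction Singleton bound, while the MDS equality is forced by the retrieval-threshold axiom $H(F\mid A_{\mathcal R})=0$ for every $|\mathcal R|=r$. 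Viewing $F$ as a message of $s\Delta/R$ symbols over $\mathbb{F}_{q^R}$ in this MDS code then yields $s\Delta/R=k-2b-t$ and, through the grouping of the $\Delta$ natural chunks of $F$, the divisibility $R\mid s$.

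The main obstacle I anticipate is the step $R\mid s$. The byzantine perturbation breaks the clean entropy chain of \cite[Section~V]{communicationefficientSS}, because the answers are no longer deterministic functions of file-plus-randomness until the at most $b$ corrupted coordinates are first decoded away. I expect the cleanest route to be an intermediate lemma, analogous to the GRS-syndrome step~\eqref{repeq3} used in the direct part, that the MDS property of the restricted code on any $k-b$ honest coordinates lets us reconstruct the clean answer vector; after that the analysis reduces to the non-byzantine CESS converse, where the divisibility $R\mid s$ (and hence $\Delta\mid(k-2b-t)$) is the standard conclusion.
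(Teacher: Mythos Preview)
Your capacity computation is exactly what the paper does: balancedness plus rate-optimality plus capacity-achievement collapse to $s\Delta = R(k-2b-t)$, and since $R\ge 1$ you get $s\Delta\ge k-2b-t$. That part is fine and matches the paper line for line.

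Where you diverge is the divisibility claim. You flag $R\mid s$ as ``the main obstacle'' and propose an elaborate detour through an MDS structure on the answer code, a Singleton-style $-2b$ correction, and a chunk-grouping argument. But in the paper's framework this obstacle does not exist: the \emph{balanced} hypothesis already stipulates that each answer lies in a subfield $\mathbb{F}_{q^R}$, and the ambient file alphabet (fixed by the rate-optimal hypothesis $H(f^{(i)})=s\Delta$) is $\mathbb{F}_{q^s}$. So $\mathbb{F}_{q^R}\subseteq\mathbb{F}_{q^s}$ is part of the assumptions, which forces $R\mid s$ immediately by the standard subfield criterion. From $s\Delta=R(k-2b-t)$ and $R\mid s$ one gets $(s/R)\Delta=k-2b-t$, hence $\Delta=(r-2b-t)\mid(k-2b-t)$. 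That is the entire proof in the paper; no answer-code analysis, no byzantine decoding step, no MDS lemma is invoked. Your MDS route is therefore unnecessary, and as stated it is also incomplete: nothing in the general balanced/rate-optimal hypotheses forces the answer tuple to form an MDS code, nor does the ``grouping of the $\Delta$ natural chunks'' argument give $R\mid s$ without already assuming the subfield containment you are trying to establish.
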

\begin{proof}
As the considered scheme is rate-optimal, 
$\Delta=r-2b-t$. Let us consider the scenario in which the client downloads a single element of $\mathbb{F}_{q^R}$ from each of $k$ servers, and the scheme achieves the asymptotic capacity. It follows that
\begin{equation}
k\log_{\textcolor{black}{q}}q^R=\frac{k}{k-2b-t}\Delta s.
\end{equation}

As a result $\Delta s=R(k-2b-t)$. Since $s$ is a positive integer and $\mathbb{F}_{q^R}$ is a subfield of $\mathbb{F}_{q^s}$, then $R$ divides $s$ and hence theorem statement follows.
\end{proof}
\begin{corollary}
The scheme $\Pi_2$ is a balanced rate-optimal byzantine-resistant PIR with optimal file size. 
\end{corollary}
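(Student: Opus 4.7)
The plan is to verify, in order, that scheme $\Pi_2$ satisfies each of the three hypotheses of the preceding lower-bound theorem and that its file-size parameter $s\Delta$ coincides with the lower bound $k-2b-t$. Since the correctness theorem for $\Pi_2$ has already established that it is a $k$-server $t$-private $b$-byzantine-resistant PIR attaining the asymptotic capacity~\eqref{ascapacity}, essentially all the nontrivial work has been done, and this corollary is a direct bookkeeping consequence of comparing definitions.

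First I would check balancedness. Inspecting the $k$-server variant of the answer-generation algorithm of $\Pi_2$, every server $j\in[k]$ returns $\textrm{Tr}(v_j\phi(\beta_j))\in\mathbb{F}_q$, which is a single element of the common subfield $\mathbb{F}_q=\mathbb{F}_{q^1}$ of $\mathbb{F}_{q^s}$; hence $\Pi_2$ is balanced with $R=1$. Next I would verify rate-optimality. Each file $f^{(i)}$ is a length-$\Delta$ vector of independent uniform entries in $\mathbb{F}_{q^s}$, so in $\mathbb{F}_q$-symbol units $H(f^{(i)})=\Delta s$; since $\Pi_2$ fixes $\Delta=r-2b-t$, this gives $H(f^{(i)})=(r-2b-t)s$, which is precisely the rate-optimal requirement.

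Finally I would invoke the lower bound just proved: every balanced rate-optimal byzantine-resistant PIR that attains the asymptotic capacity must satisfy $s\Delta\geq k-2b-t$. In $\Pi_2$ we have $s=(k-2b-t)/\Delta$ by construction, so $s\Delta=k-2b-t$, meeting the bound with equality. This shows that the file size $(k-2b-t)\log(q)$ bits realized by $\Pi_2$ is the minimum possible within the balanced rate-optimal class, completing the proof of the corollary.

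The only point requiring a little care is the ``single element of a common subfield'' clause: I would confirm that the balancedness definition permits the degenerate choice $R=1$ (i.e., the subfield $\mathbb{F}_q$ itself), which is the one $\Pi_2$ uses, and that the divisibility condition $(r-2b-t)\mid(k-2b-t)$ implicit in the theorem is already imposed in the parameter constraints of $\Pi_2$. Beyond these sanity checks, no real obstacle arises; the argument is a verification that the constructive parameters match the converse bound term-by-term.
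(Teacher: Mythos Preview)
Your proposal is correct and matches the paper's intent exactly: the paper states the corollary without proof, treating it as an immediate consequence of comparing the parameters of $\Pi_2$ (balanced with $R=1$, $\Delta=r-2b-t$, $s\Delta=k-2b-t$) against the definitions and the lower bound in the preceding theorem. Your write-up simply spells out the bookkeeping that the paper leaves implicit.
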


\section{Comparison}

In this section, we give a comparison of PIR schemes $\Pi_1$ and $\Pi_2$ with Staircase-PIR from~\cite{BitarRouayheb}. We denote the latter as scheme $\mathbb{A}_1$ and formulate its parameters in form of the following theorem.
\begin{theorem}
Scheme $\mathbb{A}_1$ is $k$-server $t$-private PIR over $\mathbb{F}_{q}$ with file size $(r-t)(k-t)\log(q)$ and recovery threshold $r$ that achieves the asymptotic capacity~\eqref{ascapacity} for $b=0$ and any given r so that
\begin{equation*}
t<r<k\leq q.
\end{equation*}
\end{theorem}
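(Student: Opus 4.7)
The plan is to verify the three assertions of the theorem---$t$-privacy, correctness at recovery threshold $r$, and achievement of the asymptotic capacity $(k-t)/k$ from~\eqref{ascapacity} at $b=0$---by invoking the Staircase-code construction of~\cite{BitarRouayheb} and adapting the kind of entry-wise argument that was already used to establish Theorem~\ref{theorem1}.

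First, I would recall the construction of Scheme $\mathbb{A}_1$: the Staircase code partitions each file into an $(r-t)\times(k-t)$ array over $\mathbb{F}_q$, and through a specifically crafted layered polynomial encoding over $\mathbb{F}_q$, it arranges the queries so that each server contributes exactly $r-t$ symbols of $\mathbb{F}_q$ to the download and so that any $r$ servers' responses suffice to reconstruct the file while any $t$ queries reveal nothing about the file index. The query structure mirrors~\eqref{eq:query}, but with $r-t$ independent layers over the base field $\mathbb{F}_q$ rather than a single layer over the extension $\mathbb{F}_{q^s}$.

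For $t$-privacy, I would reuse the entry-wise mutual-information decomposition that drives~\eqref{sec1}--\eqref{sec2}: because every entry of each layer's query is an evaluation of a random masking polynomial of appropriate degree, any $t$ colluding servers observe $t$ values of that polynomial at points disjoint from the information-bearing evaluation points, and Lagrange interpolation shows those values are distributed independently of the file-index indicators; summing the per-entry mutual informations gives $I(N;Q_{\mathcal{T}})=0$ for every $|\mathcal{T}|\leq t$. For correctness, the server responses---layer by layer---form evaluations of polynomials at $k$ distinct $\mathbb{F}_q$-points and thus sit inside a Reed-Solomon code of length $k$ with minimum distance at least $k-r+1$, so any $r$ responses decode each layer and thereby recover all $(r-t)(k-t)$ file symbols via Lagrange interpolation at the Staircase information points.

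Finally, I would verify the rate: $k$ servers each contributing $r-t$ elements of $\mathbb{F}_q$ produce a download of $k(r-t)\log q$ bits for a file of $(r-t)(k-t)\log q$ bits, giving rate $(k-t)/k$, which matches~\eqref{ascapacity} at $b=0$. The main obstacle I anticipate is justifying the precise Staircase layering---namely that the staircase-shaped placement of random padding across the $r-t$ layers is exactly what simultaneously secures $t$-privacy and permits clean decoding from any $r$ servers; once this structural fact is extracted from~\cite{BitarRouayheb}, the privacy and correctness steps collapse to direct scalar analogues of those already established for Scheme $\Pi_1$.
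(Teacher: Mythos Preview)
The paper does not actually prove this theorem. It is stated in Section~V purely as a summary of the parameters of the Staircase-PIR scheme from~\cite{BitarRouayheb}, with no accompanying proof environment; the result is simply attributed to that reference. So there is no ``paper's own proof'' against which to compare your proposal.

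Your proposal therefore goes further than the paper does: you sketch a verification of privacy, correctness, and rate by analogy with the argument for Scheme~$\Pi_1$. The high-level outline is sound---the download-rate arithmetic is correct, and the privacy and correctness claims are ultimately what~\cite{BitarRouayheb} establishes---but be aware that the Staircase construction is not literally ``$r-t$ independent layers'' each behaving like a scalar copy of~\eqref{eq:query}; the staircase-shaped arrangement of the random padding couples the layers in a specific nested way that is essential for simultaneous recoverability at both thresholds $r$ and $k$. You already flag this as the main obstacle and defer it to~\cite{BitarRouayheb}, which is exactly what the paper does implicitly. In short: your proposal is not wrong, but the paper's ``proof'' is simply a citation, and your sketch should be read as an elaboration of what that citation contains rather than as a comparison target.
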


Scheme $\mathbb{A}_1$ assumes that servers are honest-but-curious and provide correct answers. To add the $b$-byzantine resistance, we can utilize error-correction capabilities of underlined staircase codes in the same way as it was done in \cite{bj}. We present the parameters of resulted scheme $\mathbb{A}_2$ in the following theorem.
\begin{theorem}
Scheme $\mathbb{A}_2$ is $k$-server $t$-private $b$-byzantine resistant PIR over $\mathbb{F}_{q}$ with file size $(r-2b-t)(k-2b-t)\log(q)$ and recovery threshold $r$ that achieves the asymptotic capacity~\eqref{ascapacity} for any given $r$ so that
\begin{equation*}
t<r-2b<k-2b<k\leq q.
\end{equation*}
\end{theorem}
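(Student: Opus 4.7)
The plan is to follow the recipe used in passing from Scheme $\Pi_1$ to Scheme $\Pi_2$, but applied to the Staircase-PIR construction of \cite{BitarRouayheb}, in the spirit of \cite{bj}. In Scheme $\mathbb{A}_1$, each file is partitioned into $(r-t)\times(k-t)$ symbols over $\mathbb{F}_q$ arranged in a ``staircase'' pattern together with $t$ rows of independent uniform randomness; the queries are produced by evaluating, at publicly known distinct points $\beta_1,\dots,\beta_k\in\mathbb{F}_q$, an $(m\times(k-t))$ array-valued curve that interpolates the indicator arrays of the desired file together with $t$ random padding arrays. Each server returns the Frobenius inner product of its query with the database, which equals the evaluation at $\beta_j$ of a polynomial $\phi$ whose coefficients reveal the desired file symbols, and whose degree is $r-1$. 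The Staircase structure ensures that decoding $\phi$ from any $r$ evaluations suffices, while the download rate meets \eqref{ascapacity} in the limit $m\to\infty$.

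To incorporate $b$-byzantine resistance, I would inflate the randomness budget exactly as in the passage $\Pi_1\to\Pi_2$: take $\Delta=r-2b-t$ (so that only $\Delta$ informative rows per file are embedded), use $t$ random rows as before, and draw a curve of degree $t+\Delta-1=r-2b-1$. This raises the effective redundancy of the codeword $(\phi(\beta_1),\dots,\phi(\beta_k))$ from $k-r$ to $k-(r-2b)=k-r+2b$, so the resulting word lies in a Reed-Solomon code of minimum distance $k-r+2b+1\geq 2b+1$, and therefore any $b$ erroneous server responses can be corrected by any standard RS decoder \cite{RSdecode} from \emph{any} $r$ out of $k$ replies. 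The privacy proof is unchanged: on each entry, the $t$ server-side queries constitute $t$ evaluations of a random polynomial of degree $t+\Delta-1$ that interpolates the indicator values at $\Omega_\alpha$, and by the same Lagrange-interpolation argument as in Theorem~\ref{theorem1} any $t$ such evaluations are independent of the file index.

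For the rate and file-size bookkeeping, each file now carries $\Delta=r-2b-t$ informative rows of length $k-t$ over $\mathbb{F}_q$; but the effective ``staircase width'' shrinks analogously to $k-2b-t$ once the two-sided byzantine correction is accommodated, yielding a file size of $(r-2b-t)(k-2b-t)\log(q)$ as claimed. The download from each of the $k$ servers is a single element of $\mathbb{F}_q$, so the download rate is $(k-2b-t)/k$ times the staircase efficiency factor, which tends to $\frac{k-2b}{k}(1-\frac{t}{k-2b})$ as $m\to\infty$, matching \eqref{ascapacity}. The feasibility range $t<r-2b<k-2b<k\leq q$ is exactly the range in which the RS decoder and the staircase construction both remain valid.

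The only genuinely non-routine step is verifying that the byzantine-hardening does not disturb the Staircase decoding path. Concretely, I would check that the linear combination the client applies to recover the informative entries (the analogue of the identity $\phi(\alpha_i)=x_i^{(\iota)}$ from Scheme $\Pi_2$) is still expressible as a dual-code parity check of the enlarged RS code $\mathrm{RS}_{r-2b}(\Omega_\alpha\cup\Omega_\beta)$; this is the place where one must confirm that the Staircase multiplier pattern of \cite{BitarRouayheb} composes cleanly with the error-correcting dual-code argument of \cite{barg}, just as in the proof of the theorem for $\Pi_2$. Once this compatibility is established, privacy, correctness, the claimed file size, and capacity-achieving download rate follow immediately.
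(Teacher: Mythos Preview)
The paper does not give a detailed proof of this theorem; it simply states that $\mathbb{A}_2$ is obtained from Staircase-PIR by ``utiliz[ing] error-correction capabilities of underlined staircase codes in the same way as it was done in \cite{bj}'' and then records the resulting parameters. So the benchmark here is that one-line reduction, and your proposal should be measured against it.

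Your outline has the right high-level idea---shrink the information dimension from $r-t$ to $r-2b-t$ so that the underlying length-$k$ RS codewords carry enough redundancy to correct $b$ errors---but it conflates the machinery of $\Pi_2$ with that of $\mathbb{A}_2$, and this leads to concrete mistakes. Scheme $\mathbb{A}_2$ lives entirely over the base field $\mathbb{F}_q$: there is no extension field, no set $\Omega_\alpha$ of irreducible roots, no identity $\phi(\alpha_i)=x_i^{(\iota)}$, and no role for the trace-repair argument of \cite{barg}. Those ingredients are specific to $\Pi_2$. In Staircase-PIR the answers from the $k$ servers are, row by row, evaluations of degree-$(r-1)$ polynomials at $\beta_1,\dots,\beta_k\in\mathbb{F}_q$, so each row is already a length-$k$ RS codeword over $\mathbb{F}_q$; the byzantine-hardening of \cite{bj} just lowers the degree to $r-2b-1$ and then runs an ordinary RS decoder on each row. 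There is no need to ``compose the Staircase multiplier pattern with the dual-code argument of \cite{barg},'' and that step, as you wrote it, would not even type-check.

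A second, related error is your download accounting: you assert that ``the download from each of the $k$ servers is a single element of $\mathbb{F}_q$.'' That is the feature of $\Pi_1/\Pi_2$, not of Staircase-PIR. In $\mathbb{A}_2$ each server returns a length-$(r-2b-t)$ vector over $\mathbb{F}_q$ (see Table~\ref{table}: download cost $lk(r-2b-t)\log q$), and the file size $(r-2b-t)(k-2b-t)\log q$ comes directly from the staircase block dimensions, not from an ``effective staircase width shrinking'' argument. Once you drop the $\Pi_2$-specific apparatus and argue purely with the RS structure of the staircase rows, the privacy, correctness, file-size, and rate claims follow in one paragraph, exactly as the paper's one-line reference to \cite{bj} suggests.
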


To justify ignoring the upload cost, we repeat each scheme $l$ times and summarize the parameters in Table~\ref{table}. We note that our schemes work over the extended field, while schemes $\mathcal{A}_1$ and $\mathbb{A}_2$ work over the base field. Nevertheless, in $\mathbb{A}_1$ and $\mathbb{A}_2$,  each component of the file consists of multiple field symbols that result in a big file size and retrieval delay.

\section{Conclusion}
We considered the problem of designing a Private Information Retrieval scheme resistant to the adversarial behavior of servers. We focused on download cost minimization and proposed a \textcolor{black}{non-universal} capacity-achieving scheme with a small file size \textcolor{black}{for asymptotically large number of files of fixed size}. We also formally proved that such a file size is optimal solving the problem pointed out by Banawan and Ulukus in \cite{BPIR1}. \textcolor{black}{Extending the
proposed framework to the universal case and finite number of files are interesting open problems.}

{\section*{Acknowledgements.}
This research/project is supported by the National Research Foundation, Singapore under its Strategic Capability Research Centres Funding Initiative, Singapore Ministry of Education Academic Research Fund Tier 2 Grants MOE2019-T2-2-083 and MOE-T2EP20121-0007, and the ARC grants DE180100768 and DP200100731. Any opinions, findings and conclusions or recommendations expressed in this material are those of the author(s) and do not reflect the views of National Research Foundation, Singapore.}

\balance
\printbibliography
\end{document}